%% file: main.tex
\documentclass[12pt]{article}

\usepackage[margin=1.8cm]{geometry}
\usepackage{microtype}
\usepackage{xcolor}
\usepackage{wrapfig}
\input{defs}
\input{theoremsetup}

\makeatletter
\renewcommand\paragraph{\@startsection{paragraph}{4}{\z@}%
  {1ex \@plus .5ex \@minus .2ex}%
  {-1em}%
  {\normalfont\normalsize\bfseries}}
\makeatother

\usepackage[backend=biber, isbn=false, style=alphabetic, backref=true, doi=true, url=false, eprint=false, maxcitenames=10, mincitenames=3, maxbibnames=10, minbibnames=6, minalphanames=3, maxalphanames=5, defernumbers=true]{biblatex}

\usepackage{makecell}
\usepackage{booktabs}
\usepackage{array}
\usepackage{tablefootnote}
\usepackage{titlesec}
\titlespacing*{\section}
  {0pt}    % left spacing
  {1em}    % space before
  {0.5em}  % space after

\makeatletter
\newcommand{\overviewpage}{%
  \clearpage
  \thispagestyle{plain}%
  \noindent\makebox[\textwidth]{%
    \includegraphics[width=\textwidth,height=\textheight,keepaspectratio,clip]{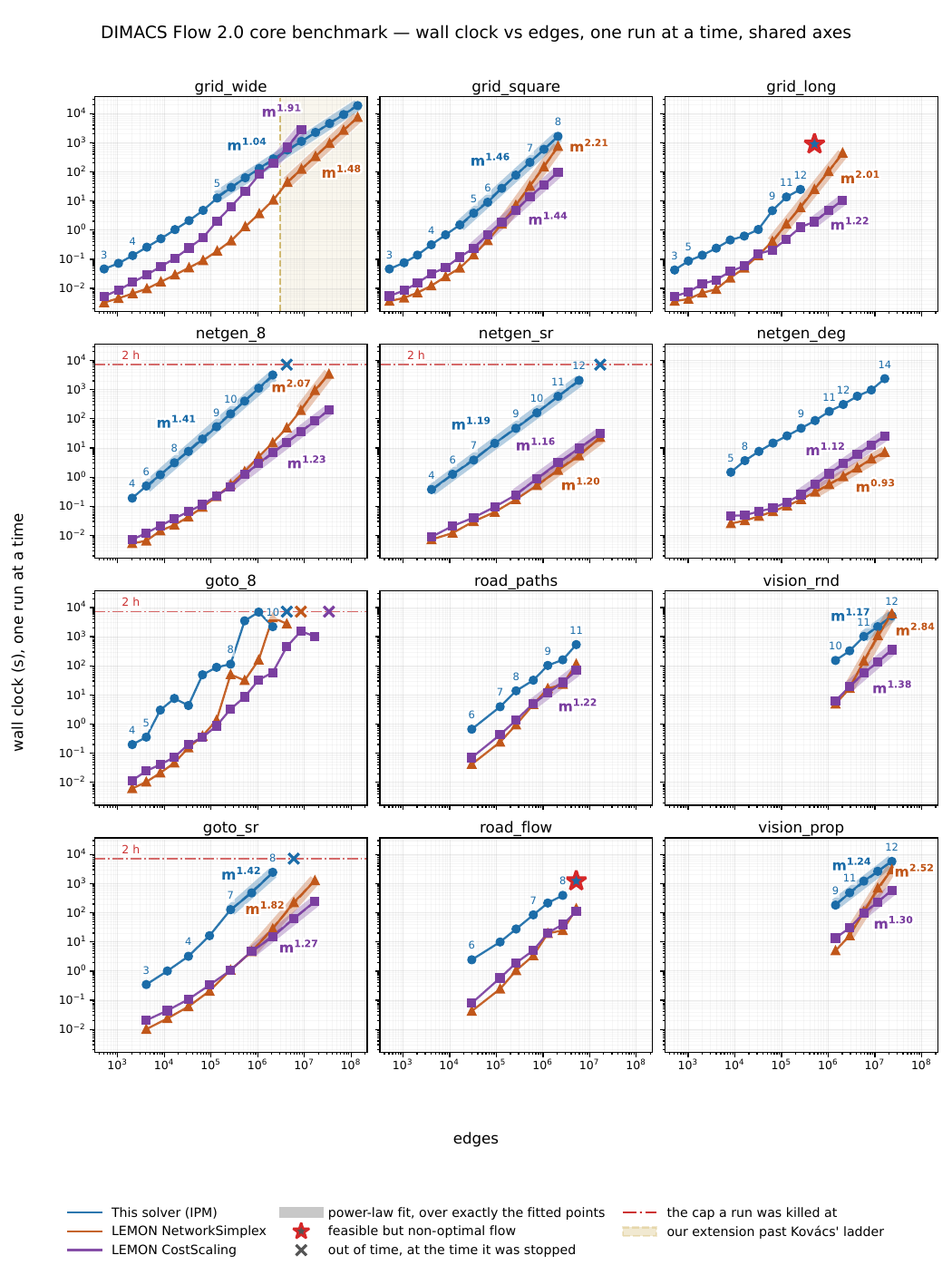}}%
  \edef\@currentlabel{\thepage}\label{page:overview}%
  \clearpage}
\makeatother

\title{A Practical Combinatorial Interior Point Method for Minimum-Cost Flow}
\author{
    Emil Schätzle \\
  \small ETH Zurich \\
  \small \texttt{emil.schaetzle@inf.ethz.ch}
  \and    
   Simon Meierhans\thanks{The research leading to these results has received funding from the starting grant “A New Paradigm for Flow and Cut Algorithms” (no. TMSGI2 218022) and grant no. 200021 204787 of the Swiss National Science Foundation.} \\
  \small ETH Zurich \\
  \small \texttt{mesimon@inf.ethz.ch}
  \and
   Rasmus Kyng\footnotemark[1] \\
  \small ETH Zurich \\
  \small \texttt{kyng@inf.ethz.ch}
  \and
  Maximilian Probst Gutenberg\footnotemark[1] \\
  \small ETH Zurich \\
  \small \texttt{maximilian.probst@inf.ethz.ch}
}

\date{September 19, 2026}
\begin{document}
\maketitle
% First page carries no number; numbering starts on the second page at 1.
\thispagestyle{empty}
\setcounter{page}{0}

\begin{abstract}
In a recent breakthrough, Chen, Kyng, Liu, Peng, Probst Gutenberg, and Sachdeva~\cite{chen2025maximum} designed the first almost-linear-time algorithm for minimum-cost flow. Their approach uses an Interior Point Method (IPM) to achieve complexity $m^{1+o(1)}\log U \log C$. However, the $m^{1+o(1)}$ factor hides intricate recursive data structures, large constants and multiple log factors. In this paper, we give an implementation following the core ideas of their IPM with practical data structures and empirically obtained decision rules. We compare our solver with the cost scaling and network simplex solvers of LEMON on the instance families of the DIMACS Flow Challenge~2.0 core benchmark. Our implementation scales competitively on five of twelve families and better on two. On \texttt{grid\_wide} we observe close to linear scaling: $m^{1.04}$ against $m^{1.48}$ for LEMON. We read these results as first evidence that the almost-linear-time framework can be relevant in practice, discuss our insights gained during the implementation process, and motivate where future work should focus to bridge the theory-practice gap.
\end{abstract}

\clearpage
\section{Introduction}

The minimum-cost flow problem and its special case, the maximum flow problem, are
among the most foundational problems in combinatorial optimization
\cite{ahuja1993network}. Since their origins in the transportation problem \cite{schrijver2002history}, they have
served as a testbed for a variety of algorithmic paradigms 
\cite{schrijver2002history}; they are the central primitive in a large number of graph
algorithms \cite{li2025deterministic, li2021vertex, doi:10.1137/1.9781611977912.60, AbboudLPS23, AbboudKLPGSYY25}; and they arise directly in applications across transportation and
logistics \cite{ahuja1993network}, image segmentation
\cite{boykov2004experimental}, and machine learning, where computing optimal
transport distances reduces to a minimum-cost flow computation
\cite{pele2009fast, peyre2019computational}. Formally, we consider the following
problem.

\begin{definition}[Minimum-cost flow]\label{def:mincostFlow}
  Let $G = (V, E)$ be a directed graph with $n = |V|$ vertices and $m = |E|$ edges,
  let $\BB \in \R^{V \times E}$ denote its vertex-edge incidence matrix, and let
  $\uu \in \Z_{\geq 0}^{E}$ and $\cc \in \Z^{E}$ be integral edge capacities and
  costs. Given a demand vector $\dd \in \Z^{V}$ with $\vecone^\top \dd = 0$, the
  minimum-cost flow problem asks for a flow $\ff$ that achieves
\begin{equation}
\min_{\substack{\BB \ff = \dd \\ \veczero \leq \ff \leq \uu}} \cc^\top \ff
\label{eq:mincostFlowLP}
\end{equation}
  where $\BB\ff = \dd$ succinctly encodes that flow $\ff$ routes demand $\dd$ (flow conservation) and $0 \leq \ff \leq \uu$ encodes that the flow does not exceed capacities (capacity constraint).
\end{definition}

Throughout, we write $U = \|\uu\|_\infty$ and $C = \|\cc\|_\infty$, and we measure
running times in terms of $m$, $n$, $U$ and $C$. Since the constraint matrix in
Definition~\ref{def:mincostFlow} is totally unimodular, the optimum is attained at an integral
flow whenever the instance is feasible. 

\paragraph{Recent Progress on the Minimum-Cost Flow Problem.}
Decades of work produced a rich family of algorithms for minimum-cost flow, culminating in weakly polynomial cost scaling methods
\cite{goldberg1990finding} and in strongly polynomial algorithms
\cite{orlin1993faster}. By the 1990s several algorithms had been refined to the point
of being genuinely implementable and were being run on real instances at scale
\cite{bunnagel1998efficient, lobel1996solving}; the 1991 DIMACS Implementation
Challenge on network flows both reflected and accelerated this effort
\cite{johnson1993network}. 

Despite this sustained attention, an algorithm that solves the minimum-cost flow problem in provably almost-linear time remained elusive. By the end of the 1990s, the best runtimes remained $\tilde{O}(m^{3/2})$ for maximum flow \cite{goldberg1998beyond}, and $\tilde{O}(mn)$ for minimum-cost flow \cite{goldberg1990finding}. 

The decisive shift came in the early 2000s when Spielman and Teng \cite{spielman2004nearly} employed methods from convex optimization to solve electric flow problems in near-linear time. Building on this, Daitch and Spielman \cite{daitch2008faster} applied
Interior Point Methods (IPMs) to minimum-cost flow and obtained a running time of
$\widetilde{O}(m^{3/2} \log^{2} U)$, the first improvement on the problem to come out
of the continuous paradigm. 

Finally, in 2022, Chen, Kyng, Liu, Peng, Probst Gutenberg, and Sachdeva
\cite{chen2022maximum, chen2025maximum} achieved the first algorithm to solve minimum-cost flow exactly in
$m^{1 + o(1)} \log U \log C$ time -- a major milestone in algorithmic graph theory. Their algorithm combines a new interior point method, based on the work of Karmarkar \cite{karmarkar1984new} and Wallacher-Zimmermann \cite{WZ92}, with sophisticated graph data structures. The result has reshaped the
area, and has since been strengthened and simplified in several directions, including
deterministic and more modular variants
\cite{chen2024almost, van2023deterministic, van2024almost}.

\paragraph{The Theory-Practice Gap.} None of the IPM-based approaches, the centerpiece of algorithmic research since the 2000s, has so far reached practice. In particular, even the almost-linear-time algorithm based on the novel combinatorial IPM has not been implemented, and the practical state of the art relies entirely on insights preceding this new era. The reference point remains the experimental evaluation of
Kov\'acs \cite{kovacs2015minimum}, which benchmarks the minimum-cost flow solvers of
the LEMON library \cite{dezso2011lemon}. Two of them are usually the most efficient: 1) a \emph{cost scaling} solver, building on and refining the implementation of 
\cite{goldberg1997efficient} of the push-relabel cost scaling method of Goldberg and
Tarjan \cite{goldberg1990finding}, and 2) a \emph{network simplex} solver adapted to
minimum-cost flow \cite{ahuja1993network}. Network simplex is particularly fast on small networks of up to a few thousand nodes. Cost scaling scales better with instance size and takes over on larger inputs. Nonetheless, on most benchmarks, both solvers exhibit super-linear scaling \cite{kovacs2015minimum}.

There is therefore a substantial gap between the asymptotics of the best algorithms
we know and the asymptotics of the algorithms we can run. Closing it is one of the
stated motivations for the ongoing DIMACS Flow Challenge 2.0
\cite{dimacsFlowChallenge2}, the first DIMACS challenge devoted to flow problems
since \cite{johnson1993network}.
There is a concrete obstacle: the $m^{o(1)}$ factor in \cite{chen2025maximum} hides
recursively constructed data structures whose depth and constants make a faithful
implementation hopeless, so it is not clear if the IPM framework and sophisticated data structures could translate to a practical implementation.

\paragraph{Our Contribution.}
This paper gives the first practical implementation of the almost-linear-time
minimum-cost flow algorithm of \cite{chen2025maximum}. Our contributions are as
follows:
\begin{itemize}
    \item We give a practical implementation of the combinatorial IPM of \cite{chen2025maximum}. In doing so, we overcome obstacles arising from numerical precision by discretizing the IPM implementation. Further, our implementation removes the need for oracle access to the optimal cost. 
    
    \item We estimate progress more carefully, which allows us to take larger steps, thereby improving progress-per-step by orders of magnitude. We also introduce rules to drive the IPM more aggressively, making the IPM feasible in practice.

    \item We implement a heuristic version of the intricate min-ratio cycle data structure of  \cite{chen2025maximum} that repeatedly extracts progress steps. In particular, we start with a single shortest-path tree. We then iterate over its fundamental cycles to take IPM steps. After processing each of these cycles, we consider swapping the off-tree edge with a tree edge, which ensures good progress of later IPM steps along fundamental cycles in practice.
    
    \item We evaluate our solver against the LEMON cost scaling and network simplex solvers on 12 instance families from \cite{kovacs2015minimum}. We present the insights generated from our implementation and where future work should continue.
\end{itemize}

Despite larger constant factors of our solver, the results are already encouraging: our solver exhibits better empirical scaling than both LEMON solvers on instances from  \texttt{vision\_rnd} and \texttt{grid\_wide}. On the latter, we observe polynomial scaling of $m^{1.04}$ versus LEMON's $m^{1.48}$. 
Given how heavily engineered the LEMON solvers are, and that our implementation is a first attempt, which has not been optimized for constants, we read this as evidence that the almost-linear-time framework has at least some relevance in practice.

\section{Almost-Linear-Time Min-Cost Flow in Theory}
\label{sec:theory}

This section briefly introduces the theoretical foundation underlying our implementation, drawing on recent work on almost-linear-time algorithms for minimum-cost flow \cite{chen2025maximum, van2023deterministic, chen2024almost, van2024almost}. These algorithms use an interior point method as the outer loop, and show that the sequence of subproblems arising at each step can be solved efficiently using sophisticated graph data structures.

\paragraph{IPM.} 
The starting point of \cite{chen2025maximum, van2023deterministic, chen2024almost, van2024almost} is a potential-reducing IPM due to Karmarkar \cite{karmarkar1984new} and Wallacher-Zimmermann \cite{WZ92} who named it a \emph{combinatorial IPM}.
At each iteration, the method decreases the potential of the current solution $\ff$, which measures the solution's closeness to the optimal cost $C^*$ and its feasibility through a barrier term:
\begin{equation}
\Phi(\ff) = 20 m \log(\cc^{\top}\ff - C^*) + \mathcal{B}(\ff) \qquad \mathcal{B}(\ff) := \sum_{e \in E} \left( (\uu(e) - \ff(e))^{-\gamma} + (\ff(e))^{-\gamma}\right),
\label{eq:IPMPotential}
\end{equation}
where $\gamma = 1/(1000 \log mU)$. If $C^*$ is unknown, it can be determined by a binary search. The barrier $x^{-\gamma}$ plays the role of the more standard $-\log x$, with the precise form chosen for numerical stability.

The IPM requires an initial flow with $\BB\ff = \dd$ and $\veczero < \ff < \uu$, which we call \emph{strictly} feasible. Such a flow can always be constructed without changing the optimal solution \cite{chen2025maximum}.

At each step, the IPM updates the current solution by a circulation $\bDelta$ (i.e., $\BB\bDelta = \veczero$), setting $\ff' = \ff + \bDelta$. Flow conservation is maintained by induction: $\BB\ff' = \BB\ff = \dd$. The step size is restricted so that strict feasibility is preserved: $\veczero < \ff' < \uu$. To reduce the potential, $\bDelta$ is chosen to minimize the inner product with the gradient $\gg$ of $\Phi(\ff)$, relative to an $\ell_1$-norm weighted by the \emph{lengths} $\bell \in \mathbb{R}^E$ (arising from the second derivative). Writing $\LL$ for the diagonal matrix of $\bell$, the update problem is:
\begin{equation}
\min_{\BB \bDelta = \veczero} \frac{\gg^\top \bDelta}{\|\LL \bDelta\|_1}.
\label{eq:flowMinRatio}
\end{equation}
By a standard path-cycle decomposition, there always exists a cycle such that the unit circulation along the cycle is optimal for \eqref{eq:flowMinRatio}. We call this the \emph{min-ratio cycle} problem. It is shown in \cite{chen2025maximum} that augmenting along an $m^{o(1)}$-approximate min-ratio cycle decreases the potential by $\Omega(1/m^{o(1)})$, so after $m^{1+o(1)}$ iterations the IPM converges to a solution that can be rounded to the minimum-cost flow. The IPM further exhibits a certain robustness in that the multiplicative changes to $\gg$ and $\bell$ are bounded over the entire update sequence. Thus approximations of $\gg$ and $\bell$ can be maintained efficiently without requiring too many changes. 

\paragraph{Min-Ratio Cycle Data Structure.}
The central object in \cite{chen2025maximum} is an $m^{o(1)}$-approximate low-stretch spanning tree (LSST): a probabilistic spanning tree $T \subseteq G$ in which the $st$-distance under $\bell$ is at most $m^{o(1)}$ times the corresponding distance in $G$ in expectation, for all $s, t \in V$. By linearity, some \emph{fundamental cycle}, formed by an off-tree edge $e = (u,v)$ and the $u$-to-$v$ path in $T$, is an $m^{o(1)}$-approximate min-ratio cycle in expectation. Markov's inequality then gives that with probability at least $1/2$, at least one fundamental cycle is a $2m^{o(1)}$-approximation. Maintaining $\tau = O(\log n)$ independent LSSTs $T_1, \ldots, T_\tau$ and applying Chernoff's bound ensures that some $T_i$ contains a good fundamental cycle with high probability.

As flow is augmented and lengths $\bell$ are updated, the trees may degrade. The data structure in \cite{chen2025maximum} handles this by identifying off-tree edges $e$ and trees $T_i$ such that either the fundamental cycle of $e$ in $T_i$ is an approximate min-ratio cycle or swapping $e$ with an edge in $T_i$ restores its quality.

\section{Transfer to Practice}
\paragraph{Numerical Precision.} In our implementation of the IPM proposed in \cite{chen2025maximum}, we have to be particularly careful with numerical stability. Floating point numbers follow the idea of scientific notation with a fixed number of digits and thus the smallest representable unit depends on the magnitude of the number itself \cite{IEEEFloat}. The associated loss of associativity might result in breaking flow conservation when routing flow mass. 
As preserving flow conservation is highly critical, we restrict ourselves to routing small \emph{atoms} whose amount of flow $\alpha$ is guaranteed to be representable exactly by all flow amounts in the feasible range. Then, associativity holds again as addition and subtraction stay exact. As a side effect, as soon as $\cc^\top\ff < C^* + 1$, we can employ rounding from \cite{Cohen1995} to obtain an optimal integral solution, which is representable no matter the choice of $\alpha$. Smaller $\alpha$ allow us to control proximity to barriers at a finer granularity at the cost of more updates and thus runtime. Consequently, we choose the largest $\alpha$ that still leads to the exact optimum.

\paragraph{Loss of Strict Feasibility.} The idea of strict feasibility is to always allow us to route a small amount along every cycle in both directions. Using atoms to discretize the values that the flow can take necessarily leads to a loss of this property, as the flow is restricted to the feasible range and thus there must be some discrete threshold at which flow mass can only be routed in one direction. All that remains to choose is whether this threshold at which edges turn \emph{directed} is 0 or $\alpha$ away from the hard bounds of the feasible range. We choose it at 0 to make the exact optimum reachable by our IPM.

\paragraph{Refined IPM.}
The potential in (\ref{eq:IPMPotential}) depends on the cost of an optimal flow $C^*$, which we usually do not know. While it can easily be binary-searched, this adds an undesirable logarithmic factor in practice. 

Instead, we can introduce a \emph{cost scale} $\lambda$ that does not depend on $C^*$. This cost scale is initialized such that the cost and barrier term have the same magnitude. We use barrier $\mathcal{B}'(\ff)$ obtained by shifting $\mathcal{B}(\ff)$ horizontally by a single atom $\alpha$ to ensure well-definedness even under the loss of strict feasibility. To ease scaling, we also shift the barrier vertically by $\mathcal{B}'(\uu/2)$, which yields that the minimum of the barrier terms $ \mathcal{B}'(\ff) -  \mathcal{B}'(\uu/2)$ becomes $0$. We summarize the potential used in our implementation below:
\begin{equation}
\Phi'(\ff) = \lambda \cc^\top \ff + \mathcal{B}'(\ff) -  \mathcal{B}'(\uu/2) \qquad \mathcal{B}'(\ff) := \sum_{e \in E} \left( (\uu(e) - \ff(e) + \alpha)^{-\gamma} + (\ff(e) + \alpha)^{-\gamma} \right).    \label{eq:NewIPMPotential}
\end{equation}
We now periodically increase $\lambda$,  crudely mimicking the effect of the logarithm in (\ref{eq:IPMPotential}). We give more details on how to increase $\lambda$ in the last paragraph of this section.

\paragraph{Refined Update Problem.}
As seen in Section \ref{sec:theory}, \cite{chen2025maximum} routes a small amount on an approximate min-ratio cycle to make good progress and guarantee almost-linearly many updates. While the ratio between gradient and length of a cycle serves as a good proxy for progress in the IPM, our implementation makes more accurate estimations and takes more carefully calculated steps. More precisely, the progress made by pushing an update amount $\delta > 0$ along an oriented cycle consists of three summands: 1) the linear change in cost (total cost along the cycle times $\delta$); 2) the convex change in \emph{forward} barriers (increasing in $\delta$); 3) the convex change in \emph{reverse} barriers (decreasing in $\delta$). This split allows us to bound each term individually, obtain a better estimate and take much larger steps. We call the cycles that optimize our function \emph{max-progress} cycles.

\paragraph{Bounding the Progress.} Since the progress is the sum of the three convex functions, flow mass has to be routed in the direction of negative gradient to reduce the potential. We bound the summands as follows: 1) for the cost we keep the exact term; 2) for the increase in forward barrier terms we use a second-order approximation with the second-order term scaled depending on the ratio between $\delta$ and the minimum residual capacity of any edge such that this yields a sound upper bound; 3) for the decrease in reverse barrier terms we use a single barrier term which we offset and scale such that the first and second derivatives match the respective sums of the actual reverse barriers (proof in Appendix \ref{appendix:proof_of_upper_bound}).

\paragraph{Computing Step Size and Progress.} Because this upper bound is convex again, we can find its minimum efficiently and an approximate minimum, which suffices in practice, even more efficiently. In total, computing the progress bound requires six aggregates of the cycle: the total cost, the total first derivative and second derivative of the forward and reverse barriers, and the minimum residual capacity of the forward edges.

\paragraph{Finding Max-Progress Cycles.} In \cite{chen2025maximum}, the min-ratio cycle data structure maintains $m^{o(1)}$-approximate LSSTs $T_1, T_2, \ldots, T_\tau$ for $\tau = \mathcal{O}(\log n)$ along with estimates of the ratio for all fundamental tree cycles. However, these LSSTs are intricate and complex, and thus not feasible in practice, leading us to a simplified approach: we only maintain a single tree $T$, and we do not enforce it to be an LSST of good quality. Instead, we fix a root $r$, and initialize $T$ to be a single-source shortest paths (SSSP) tree of $r$ with respect to the lengths $\bell$. We observed empirically that $T$ is typically a good LSST.  We have shared this insight with a group of researchers who have since shown that shortest path trees yield an $\mathcal{O}(\log^5 n)$-approximate LSST under smoothed analysis \cite{heuristicLSST}. 

As in \cite{chen2025maximum}, we evaluate all fundamental cycles and calculate their progress. To collect the required aggregates efficiently, we use a depth-first search on the tree along with a modified union-find data structure \cite{tarjan1983data}. We then choose the max-progress cycles to update our solution. Since our progress estimates are more informative than the gradient-length ratios, this allows us to find a fundamental cycle that makes at least as much progress as the min-ratio cycles of our tree $T$. For efficiency, we maintain $T$ in a link-cut data structure (like \cite{chen2025maximum}), which enables fast queries given an off-tree edge for the progress by maintaining an $\varepsilon$-approximation of the aggregates under fast flow updates along fundamental cycles. 

\paragraph{Maintaining Max-Progress Cycles.} To find good fundamental cycles, we maintain a queue that groups the off-tree edges by the magnitude of the progress achieved by their fundamental cycles. We then repeatedly pop the edge with the largest progress estimate in the queue and use it for an IPM step. Note that the progress initially computed for off-tree edges might become stale, i.e., change significantly. Thus, whenever an edge is popped from the queue, we first re-evaluate the aggregates and re-calculate the optimal step size.

As lengths $\bell$ change, we observed that the LSST quality deteriorates quickly, leading to low progress. Recomputation from scratch is, however, too costly, as our IPM runs for many iterations. Instead, we also repair our tree $T$. To this end, after considering a fundamental cycle to update the solution, we check whether swapping its off-tree edge brings the tree closer to an SSSP tree again, and thus to an LSST. In practice, these repairs are so efficient that we never have to redraw the LSST from scratch. 

Lastly, we say that inspecting $2m$ off-tree edges constitutes a \emph{phase} and at the beginning of a new phase, we recompute all aggregates exactly and repopulate our queue with fresh estimates. We point out two remarkable facts. 1) The SSSP tree seems to always allow for good updates, suggesting that the IPM lengths are sufficiently random for the analysis in \cite{heuristicLSST} to hold, 2) we generally do not need to converge closely to the exact SSSP; few updates typically suffice. 

\paragraph{Dealing With Directed Edges.}
Recall how atomic flow leads to directed edges, and we have so far ignored this issue in our description of the data structure. Having directed edges in the tree $T$ is unfavorable because all cycles with opposite directed edges become \emph{unroutable}. To mitigate this, we enforce as an invariant that we can always route flow away from the root. We found this to reduce the number of unroutable cycles drastically in practice. However, we can only maintain this invariant if all nodes are reachable from the root, which is not true in general graphs. Thus, we split the residual graph of the initial flow into strongly connected components during preprocessing and solve them individually: clearly, this enables the invariant while leaving the optimum unchanged (it is not possible to route any flow on edges between components).

\paragraph{Adjusting the Cost Scale.} We previously defined a \emph{phase} to be the inspection of $2m$ off-tree edges. Recall that at the beginning of each phase, we recompute all progress estimates for off-tree edges. Note that due to atomic flow, this estimate might be 0 because routing $\alpha$ in the direction of negative gradient can already overshoot the minimum of our bound. We take a high share of such cycles as our signal that the IPM is well-balanced and thus double the cost scale $\lambda$ until at least a constant fraction (10 \%) of cycles makes progress again. Eventually, when almost all cycles either have an edge directed against the direction of negative gradient or have cost 0, doubling $\lambda$ will have almost no effect anymore and we instead round to an integral solution and terminate. Otherwise, we compute the $\varepsilon$ to use for this phase. Larger $\varepsilon$ reduce the work required to maintain the link-cut tree but worsen our progress estimates (because we have to assume that we are on the worst-case end of the $\varepsilon$-band around the current aggregates). In practice, choosing $\varepsilon$ such that at least half of the cycles that make progress at $\varepsilon=1$ maintain at least 65 \% of their progress works well. We also employ several heuristics to detect livelocks due to numerical discretization.

\section{Experiments}
\paragraph{IPM Implementation.}
Our IPM chooses $\alpha$ as a compile-time constant power of two. Further, we choose $\gamma=2$: larger $\gamma$ tend to admit exact convergence under larger $\alpha$ but past a certain threshold, we no longer converge to the exact optimum.

\paragraph{Baselines and Instances.} We follow the call for papers and compare our IPM implementation to the network simplex and cost scaling implementations from LEMON 1.3.1 \cite{dezso2011lemon} on the core instances from \cite{kovacs2015minimum}: \texttt{netgen\_*} \cite{klingman1974netgen}, \texttt{goto\_*} \cite{goldberg1992implementing}, \texttt{grid\_*} \cite{resende1993interior}, \texttt{road\_*} \cite{demetrescu2009shortest}, and \texttt{vision\_*} \cite{uwomaxflow,Boykov2006}.
We also extend \texttt{grid\_wide}, for which our solver showed promising scaling, by 6 more sizes (with a modified GRIDGRAPH).

\paragraph{Machine and Setup.}
We conduct our experiments on a machine with an AMD EPYC 7742 processor with 64 cores (hyperthreading disabled). The L1 instruction cache is 32 KiB per core and data caches are 32 KiB per-core L1, 512 KiB per-core L2, and 16 MiB per-4-cores L3. The CPU has a base frequency of 2.25 GHz and has frequency scaling enabled (up to 3.4 GHz; cannot be disabled due to cloud setting). The machine has 512 GB of DDR4 RAM. We run Ubuntu 24.04.5 LTS with kernel 6.8.0-139-generic. All code was compiled using g++ 13.3.0 and flags \texttt{-std=c++23 -O3 -march=native -ffast-math -fno-finite-math-only}. Due to time constraints, we only measured the wall time of a single run on a single seed for each family/size pair. Only one solver was active at a time and no other work was performed simultaneously. The first instance of each family is started at $\alpha=1$, which is then halved until we reach the exact optimum. Each consecutive point follows the same strategy warm-started from the $\alpha$ of the previous instance. We assume oracle access to the largest $\alpha$ that achieves the exact minimum for our runtime plots.

\paragraph{Plots.} On page \ref{page:overview} we present our evaluation. On the x-axis, we report the size of the instance in edges and on the y-axis the runtime in seconds. Both axes are logarithmic. Each line represents one solver: orange is LEMON's network simplex, purple is LEMON's cost scaling and blue is  our IPM implementation. The small numbers above the IPM line mark increases in bits required to solve the instance exactly ($\alpha = 2^{-\text{bits}}$). We report timeouts after 2 hours (red line) with a cross. Timeouts did not apply to the extended \texttt{grid\_wide} instances but we retired LEMON cost scaling after it was persistently slower than our IPM.

\paragraph{Failures.} Two IPM points marked with a red star returned a feasible but non-optimal flow, for distinct reasons. On \texttt{grid\_long}, our termination rule fires too early (smaller $\alpha$ cannot compensate). This issue persists on the next larger instance of the family, which is why we retire the IPM on this family. On  the largest \texttt{road\_flow} instance, the starred point is the smallest $\alpha$ at which the IPM terminates at all; at every smaller $\alpha$ we enter a livelock that our current heuristics cannot detect. 

\paragraph{Scaling.}
For the remaining families, we mainly care about the exponent of the least-squares power-law fit (which is the slope in the log-log plot) because our implementation is not highly optimized and thus loses a constant factor compared to the baselines. We highlight a few selected exponents in the plots. We observe that our exponent is competitive with the best baseline solver for many families: \texttt{netgen\_sr}, \texttt{netgen\_deg}, \texttt{grid\_square}, \texttt{road\_paths}, and \texttt{vision\_prop}. For two families, our solver wins: slightly for \texttt{vision\_rnd} and significantly for \texttt{grid\_wide} ($m^{1.04}$ vs. $m^{1.48}$). In contrast, we lose on \texttt{netgen\_8} ($m^{1.41}$ vs. $m^{1.23}$) and \texttt{goto\_sr} ($m^{1.42}$ vs. $m^{1.27}$). For \texttt{goto\_8}, we encounter a lot of variance on all solvers.

\setlength{\intextsep}{0pt}
\begin{wrapfigure}{r}{0.5\textwidth}
    \centering
    \includegraphics[
        width=0.48\textwidth,
        keepaspectratio,
        clip
    ]{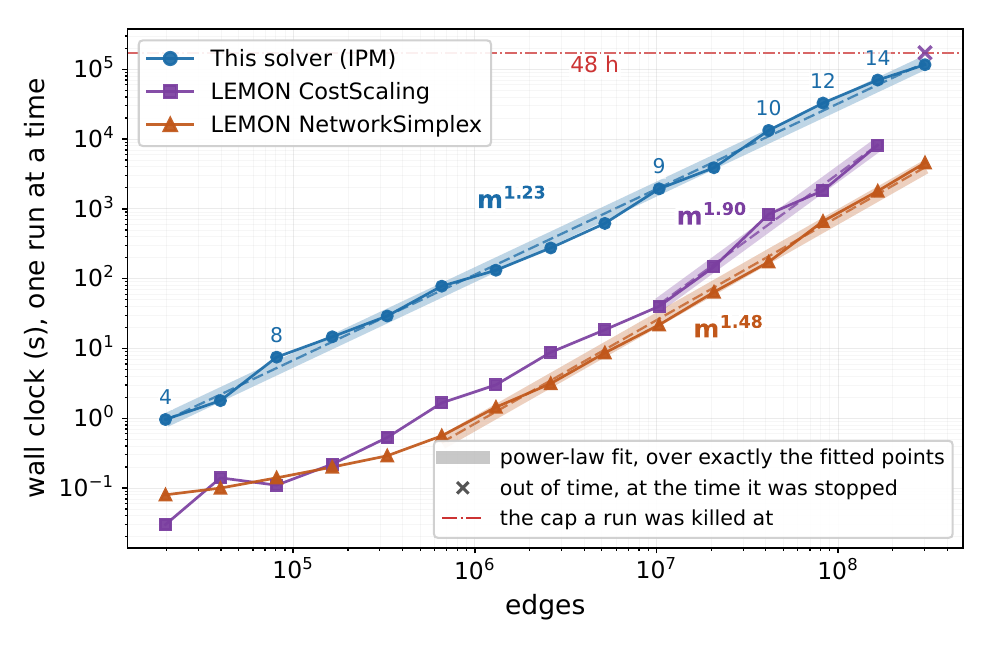}
    \caption{An extended scaling experiment on the \texttt{LB07-bunny-lrg} surface-fitting instance.}
    \label{fig:bunny}
\end{wrapfigure}
\paragraph{Extending \texttt{vision\_rnd}.}
Because the exponent in the overview figure seemed promising, we conducted a scaling experiment on a family constructed from the \texttt{LB07-bunny-lrg} surface-fitting instance \cite{uwomaxflow,lempitsky2007global}. Similarly to \texttt{vision\_rnd} in the overview, our IPM has the best exponent. Further, for the largest datapoint, we beat LEMON cost scaling, which times out at 48h. However, LEMON network simplex shows a much lower exponent than in the overview, which keeps our IPM from crossing below both baselines. Nevertheless, we observe stable scaling with a low exponent over more than 4 decades. 

\section{Insights}
\paragraph{Other Approaches.}
Before settling on the design introduced above, we also evaluated several alternatives: extending the barrier past the feasible range to avoid directed edges, different LSST designs, different rules to increase the cost scale and detect convergence, using forests instead of trees to route flow in a parallel and cache-friendly manner, and scaling precision dynamically instead of choosing statically.
While some of these approaches showed interesting constant improvements, we decided to submit the presented algorithm because it is more stable and scales competitively across many instance families and sizes.

\paragraph{Coupling Between Cost Scale and Precision.}
One particularly interesting lesson of our research is that $\lambda$ and $\alpha$ are coupled. Increasing $\lambda$ very carefully leads to convergence to the optimum even at large $\alpha$ but also bad runtime scaling. On the other hand, we came up with a formula for a constant choice of $\lambda$ (and a corresponding $\alpha$) for which the IPM always converges to the exact optimum but first experiments have shown bad scaling too. As the presented algorithm exhibits better scaling than either extreme, choosing a good strategy to drive the cost scale and adequate precision seems crucial for good scaling.

\paragraph{Future Work.}
Our current rules seem to work especially well for the \texttt{grid\_wide} family, demonstrating that our approach can scale (almost) linearly in practice. Thus, understanding why our IPM scales well on this family and struggles on others could lead to even better steering rules for the IPM and practical almost-linear minimum-cost flow on more, if not all, instance families.
\overviewpage

\printbibliography 

\clearpage
\appendix
\section{AI Use}
We used AI to spell and grammar check, get feedback on readability, polish individual sentences, generate the plots, and find related literature. We used agentic AI for coding and evaluation for some parts of the implementation. The proof in Appendix \ref{appendix:proof_of_upper_bound} was generated with AI; two authors independently verified it.
\section{Proof of Upper Bound}
\label{appendix:proof_of_upper_bound}
\noindent\textbf{Setting.}
Fix a real number $\gamma>0$. For $a>0$, $b>0$ let
\[
  f_{a,b}(x) := a(b+x)^{-\gamma} - a b^{-\gamma}, \qquad x\ge 0,
\]
so that $f_{a,b}(0)=0$ and
\[
  f_{a,b}'(x) = -\gamma\, a\,(b+x)^{-\gamma-1}, \qquad
  f_{a,b}''(x) = -\gamma(-\gamma-1)\, a\,(b+x)^{-\gamma-2}.
\]
Given $n\ge 1$, $a_1,\dots,a_n>0$ and $b_1,\dots,b_n>0$, put
$S := \sum_{i=1}^n f_{a_i,b_i}$; note $S(0)=0$.
Since a sum of terms $(b_i+x)^{-\gamma}$ with distinct $b_i$ is not itself of the
form $a(b+x)^{-\gamma}$, the conditions $f'=S'$ and $f''=S''$ are imposed at the
single point $x=0$.

\begin{theorem}[Bernoulli's inequality, negative exponent]\label{lem:bernoulli}
Let $r<0$. Then $(1+u)^{r}\ \ge\ 1+ru$ for all $u>-1$, with equality if and only if $u=0$.
\end{theorem}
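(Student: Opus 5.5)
We prove the inequality by applying convexity to the function $\phi(u) := (1+u)^{r}$ on the interval $(-1,\infty)$. The right-hand side $1+ru$ is exactly the tangent line of $\phi$ at $u=0$, because $\phi(0)=1$ and $\phi'(0)=r$. So the claim is that a strictly convex function lies above its tangent line, touching it only at the point of tangency.

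\textbf{Step 1: strict convexity.} Differentiating twice gives $\phi''(u) = r(r-1)(1+u)^{r-2}$. Since $r<0$, both $r$ and $r-1$ are negative, so $r(r-1)>0$. Also $(1+u)^{r-2}>0$ for $u>-1$. Hence $\phi''>0$ on the whole interval, and $\phi$ is strictly convex there.

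\textbf{Step 2: compare with the tangent line.} Set $g(u) := \phi(u) - 1 - ru$. Then $g(0)=0$ and $g'(u) = r\bigl((1+u)^{r-1} - 1\bigr)$, which vanishes at $u=0$. Because $g'' = \phi'' > 0$, the derivative $g'$ is strictly increasing. So $g'<0$ on $(-1,0)$ and $g'>0$ on $(0,\infty)$.

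\textbf{Step 3: conclude.} By Step 2, $g$ is strictly decreasing on $(-1,0]$ and strictly increasing on $[0,\infty)$. Its unique minimum is therefore at $u=0$, with value $g(0)=0$. This gives $g(u)\ge 0$ for all $u>-1$, with equality if and only if $u=0$.

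As an alternative to Steps 2 and 3, apply Taylor's theorem with Lagrange remainder: $g(u) = \tfrac12 \phi''(\xi)u^2$ for some $\xi$ strictly between $0$ and $u$. This expression is positive whenever $u\neq 0$.

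\textbf{Where care is needed.} The argument is routine. The only point to check is the sign of $r(r-1)$, which is where the hypothesis $r<0$ is used. One must also confirm that every $\xi$ between $0$ and $u$ satisfies $1+\xi>0$, so that the power $(1+\xi)^{r-2}$ is defined and positive; this holds because the interval between $0$ and $u$ stays inside $(-1,\infty)$.
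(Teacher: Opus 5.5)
Your proof is correct and follows essentially the same route as the paper: both study $g(u)=(1+u)^{r}-1-ru$, show $g'<0$ on $(-1,0)$ and $g'>0$ on $(0,\infty)$, and conclude that $g$ attains its unique minimum value $0$ at $u=0$. The only cosmetic difference is that you get the sign of $g'$ from $g''>0$, whereas the paper reads it off directly from the strict monotonicity of $t\mapsto t^{r-1}$ together with $r<0$.
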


\begin{proof}
Let $\phi(u) := (1+u)^{r}-1-ru$ for $u>-1$. Then $\phi(0)=0$ and
\[
  \phi'(u) = r\big((1+u)^{r-1}-1\big).
\]
Because $r-1<0$, the map $t\mapsto t^{\,r-1}$ is strictly decreasing on $(0,\infty)$,
so $(1+u)^{r-1}-1$ is positive for $-1<u<0$ and negative for $u>0$. Since $r<0$,
we get $\phi'<0$ on $(-1,0)$ and $\phi'>0$ on $(0,\infty)$. Hence $\phi$ is
strictly decreasing on $(-1,0]$ and strictly increasing on $[0,\infty)$, so
$\phi\ge\phi(0)=0$ with equality only at $u=0$.
\end{proof}

\begin{theorem}\label{thm:main}
Let $\gamma>0$, $a_1,\dots,a_n>0$, $b_1,\dots,b_n>0$ and $S=\sum_{i=1}^n f_{a_i,b_i}$.
There exist unique $a>0$, $b>0$ with
\[
  f_{a,b}'(0)=S'(0) \qquad\text{and}\qquad f_{a,b}''(0)=S''(0),
\]
namely
\[
  b=\frac{\sum_{i=1}^n a_i b_i^{-\gamma-1}}{\sum_{i=1}^n a_i b_i^{-\gamma-2}},
  \qquad
  a=b^{\,1+\gamma}\sum_{i=1}^n a_i b_i^{-\gamma-1}.
\]
For this choice,
\[
  f_{a,b}(x)\ \ge\ S(x) \qquad\text{for all } x\ge 0,
\]
and the inequality is strict for every $x>0$ unless $b_1=\dots=b_n$
(in which case $f_{a,b}=S$).
\end{theorem}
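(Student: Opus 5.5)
The plan is to reduce the comparison of $f_{a,b}$ and $S$ to a comparison of their first derivatives on $[0,\infty)$, and to get that derivative comparison from Jensen's inequality. Since $f_{a,b}(0)=0=S(0)$, it suffices to show $f_{a,b}'(x)\ge S'(x)$ for all $x\ge 0$, with strict inequality for $x>0$ when the $b_i$ are not all equal. Integrating from $0$ to $x$ then gives $f_{a,b}(x)\ge S(x)$, strictly for $x>0$ because the integrand is positive on $(0,x]$.

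\textbf{Existence and uniqueness.} Write $A_1:=\sum_i a_i b_i^{-\gamma-1}$ and $A_2:=\sum_i a_i b_i^{-\gamma-2}$. The two conditions read
\[
\gamma a b^{-\gamma-1}=\gamma A_1, \qquad \gamma(\gamma+1)a b^{-\gamma-2}=\gamma(\gamma+1)A_2 .
\]
Dividing the second by the first (everything is positive) forces $b^{-1}=A_2/A_1$. The first equation then forces $a=b^{1+\gamma}A_1$. Conversely, these values satisfy both equations and are positive, so they are the unique solution. This is routine.

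\textbf{Key step.} Set $w_i:=a_i b_i^{-\gamma-1}>0$, $W:=\sum_i w_i=A_1$ and $t_i:=1/b_i$. Then
\[
-S'(x)/\gamma=\sum_i a_i(b_i+x)^{-\gamma-1}=\sum_i w_i(1+x t_i)^{-\gamma-1}.
\]
For the single term,
\[
-f_{a,b}'(x)/\gamma=a b^{-\gamma-1}(1+x/b)^{-\gamma-1}=W(1+x/b)^{-\gamma-1}.
\]
The formula for $b$ says exactly that $1/b=\sum_i w_i t_i/W$, i.e.\ $1/b$ is the $w$-weighted mean of the $t_i$. Fix $x\ge0$ and let $\varphi_x(t):=(1+xt)^{-\gamma-1}$ on $t>0$. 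Its second derivative is $(\gamma+1)(\gamma+2)x^2(1+xt)^{-\gamma-3}$, so $\varphi_x$ is convex, and strictly convex when $x>0$. Jensen's inequality with weights $w_i/W$ gives
\[
\sum_i \tfrac{w_i}{W}\,\varphi_x(t_i)\ \ge\ \varphi_x(1/b),
\]
that is, $-S'(x)\ge -f_{a,b}'(x)$, which is the desired $f_{a,b}'\ge S'$. When $x>0$ and the $t_i$ are not all equal, strict convexity makes the inequality strict.

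\textbf{Equality case.} If $b_1=\dots=b_n$, then $b=b_1$ and $a=\sum_i a_i$, so $f_{a,b}=S$ identically. Bernoulli's inequality (Theorem~\ref{lem:bernoulli}) could instead be used for a termwise tangent-line argument, but the Jensen route seems cleaner.

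The only delicate point is recognizing that the moment-matching choice of $b$ makes $1/b$ a weighted mean of the $1/b_i$, with weights coming from the first-derivative condition. Once that is seen, convexity of $t\mapsto(1+xt)^{-\gamma-1}$ does all the work.
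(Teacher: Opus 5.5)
Your proof is correct and follows essentially the paper's route. Both arguments reduce the claim to $f_{a,b}'\ge S'$ on $[0,\infty)$ via a weighted Jensen inequality, with weights proportional to $a_i b_i^{-\gamma-1}$, using that the fitted $1/b$ is the corresponding weighted mean of the $1/b_i$, and then integrate from $0$. The only differences are cosmetic: the paper applies Jensen to $t\mapsto t^{-\gamma-1}$ at the points $1+x/b_i$ and proves it by hand from Bernoulli's inequality, whereas you cite Jensen for $t\mapsto(1+xt)^{-\gamma-1}$ and get strictness from its second derivative.
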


\begin{proof}
Write $f:=f_{a,b}$ and
\[
  M:=\sum_{i=1}^n a_i b_i^{-\gamma-1}>0, \qquad
  W:=\sum_{i=1}^n a_i b_i^{-\gamma-2}>0 .
\]

\emph{Step 1: the fit.}
Dividing $f'(0)=S'(0)$ by $-\gamma\neq 0$ and $f''(0)=S''(0)$ by
$-\gamma(-\gamma-1)\neq 0$, the two conditions read
\[
  a b^{-\gamma-1}=M, \qquad a b^{-\gamma-2}=W .
\]
Dividing the first equation by the second gives $b=M/W$, and then the first
gives $a=Mb^{\,1+\gamma}$; both are positive, and these values satisfy both
equations. This proves existence, uniqueness and the stated formulas.

\emph{Step 2: rewriting $S'$ and $f'$.}
Put $w_i:=a_i b_i^{-\gamma-2}>0$. Then
\[
  W=\sum_{i=1}^n w_i, \qquad M=\sum_{i=1}^n w_i b_i, \qquad
  a_i b_i^{-\gamma-1}=w_i b_i ,
\]
so $b=M/W$ is the $w$-weighted mean of $b_1,\dots,b_n$.
Fix $x\ge 0$ and set
\[
  r:=-\gamma-1<0, \qquad
  t_i:=1+\frac{x}{b_i}>0, \qquad
  v_i:=\frac{w_i b_i}{M}>0, \qquad\text{so that}\qquad \sum_{i=1}^n v_i=1 .
\]
Then
\[
  1+\frac{x}{b} = 1+\frac{xW}{M}
  = \frac{1}{M}\sum_{i=1}^n \big(w_i b_i + w_i x\big)
  = \sum_{i=1}^n \frac{w_i b_i}{M}\Big(1+\frac{x}{b_i}\Big)
  = \sum_{i=1}^n v_i t_i ,
\]
and therefore, using $a_i b_i^{-\gamma-1}=Mv_i$ and $a b^{-\gamma-1}=M$,
\begin{align*}
  S'(x) &= -\gamma\sum_{i=1}^n a_i (b_i+x)^{r}
         = -\gamma\sum_{i=1}^n a_i b_i^{-\gamma-1}\Big(1+\frac{x}{b_i}\Big)^{r}
         = -\gamma M\sum_{i=1}^n v_i\, t_i^{\,r}, \\
  f'(x) &= -\gamma\, a\,(b+x)^{r}
         = -\gamma\, a b^{-\gamma-1}\Big(1+\frac{x}{b}\Big)^{r}
         = -\gamma M\Big(\sum_{i=1}^n v_i t_i\Big)^{r} .
\end{align*}

\emph{Step 3: Jensen's inequality for $t\mapsto t^{\,r}$.}
Let $m:=\sum_{i=1}^n v_i t_i>0$ and $u_i:=t_i/m-1>-1$. Then
$\sum_{i=1}^n v_i u_i=\frac{1}{m}\sum_{i=1}^n v_i t_i-1=0$, and
Theorem~\ref{lem:bernoulli} gives
\[
  \sum_{i=1}^n v_i\, t_i^{\,r}
  = m^{r}\sum_{i=1}^n v_i (1+u_i)^{r}
  \ \ge\ m^{r}\sum_{i=1}^n v_i (1+r u_i)
  = m^{r}
  = \Big(\sum_{i=1}^n v_i t_i\Big)^{r}.
\]
Since all $v_i>0$, equality holds if and only if $u_1=\dots=u_n=0$, i.e.
$t_1=\dots=t_n$, i.e. $x=0$ or $b_1=\dots=b_n$.

\emph{Step 4: conclusion.}
Multiplying the inequality of Step~3 by $-\gamma M<0$ and using Step~2 yields
\[
  S'(x)\ \le\ f'(x) \qquad\text{for all } x\ge 0 .
\]
Since $f(0)=0=S(0)$,
\[
  f(x)-S(x)=\int_0^x \big(f'(t)-S'(t)\big)\,dt\ \ge\ 0 \qquad (x\ge 0).
\]
If the $b_i$ are not all equal, the equality statement of Step~3 shows that the
continuous integrand is strictly positive on $(0,x]$, hence $f(x)>S(x)$ for
$x>0$. If $b_1=\dots=b_n=:\beta$, then $M=\beta^{-\gamma-1}\sum_i a_i$ and
$W=\beta^{-\gamma-2}\sum_i a_i$, so the fit returns $b=\beta$ and $a=\sum_i a_i$,
and $f=S$.
\end{proof}

\begin{remark}
The argument uses only the power rule $\frac{d}{dt}t^{\,s}=s\,t^{\,s-1}$ on
$(0,\infty)$ and Theorem~\ref{lem:bernoulli}, both valid for arbitrary real
exponents, so $\gamma$ may be any positive real number.
\end{remark}
\end{document}

%% file: defs.tex
\usepackage{enumitem}
\usepackage{graphicx}
\usepackage{fancybox}
\usepackage{comment}
\usepackage{xcolor}
\usepackage{nameref}
\usepackage{bbm}

\definecolor{ForestGreen}{rgb}{0.1333,0.5451,0.1333}
\definecolor{DarkRed}{rgb}{0.8,0,0}
\definecolor{Red}{rgb}{1,0,0}
\usepackage[linktocpage=true,
colorlinks,
linkcolor=ForestGreen,citecolor=ForestGreen,
bookmarks,bookmarksopen,bookmarksnumbered]
{hyperref}

\usepackage{amsmath}
\usepackage{amssymb}
\usepackage{amsthm}
\usepackage{mathtools}

\usepackage{subfig}

\usepackage[ruled, noend, linesnumbered]{algorithm2e}

\usepackage{float}
\usepackage{cleveref}

\newcommand\veczero{\boldsymbol{0}}
\newcommand\vecone{\boldsymbol{1}}

\newcommand\R{\mathbb{R}}

\newcommand\Z{\mathbb{Z}}

\newcommand{\bell}{\boldsymbol{\ell}}

\newcommand{\bDelta}{\boldsymbol{\Delta}}

\renewcommand{\tilde}{\widetilde}

\renewcommand{\epsilon}{\ensuremath\varepsilon}

\renewcommand{\phi}{\ensuremath{\varphi}}

\newcommand\BB{\boldsymbol{\mathrm{{B}}}}

\newcommand\LL{\boldsymbol{\mathrm{{L}}}}

\newcommand\cc{\boldsymbol{\mathrm{c}}}
\newcommand\dd{\boldsymbol{\mathrm{d}}}

\newcommand\ff{\boldsymbol{\mathrm{f}}}
\renewcommand\gg{\boldsymbol{\mathrm{g}}}

\newcommand\uu{\boldsymbol{\mathrm{u}}}

\SetKw{Break}{break}
\SetKwComment{Comment}{$\triangleright$\ }{}
\SetCommentSty{mycommfont}

%% file: theoremsetup.tex
\newtheorem{theorem}{Theorem}[section]

\newtheorem{definition}[theorem]{Definition}
\newtheorem{remark}[theorem]{Remark}